\documentclass[letterpaper, 10 pt, conference]{ieeeconf}  
\IEEEoverridecommandlockouts                              
\IEEEoverridecommandlockouts    
\usepackage{amsmath,amssymb,mathtools}
\usepackage{bm}
\usepackage{graphicx}
\usepackage{mathtools}
\usepackage{cite}
\usepackage{epstopdf}
\usepackage[normalem]{ulem}
\usepackage{psfrag, color}
\usepackage{caption}
\usepackage{algorithm}
\usepackage{algpseudocode}

\newtheorem{assumption}{Assumption}
\newtheorem{theorem}{Theorem}
\newtheorem{lemma}{Lemma}
\newtheorem{corollary}{Corollary}
\newtheorem{remark}{Remark}

\title{\LARGE \bf Learning Local Optimal Controller for a Class of Nonlinear Systems via Impulse-Supervised Exploration}

\author{Adebayo Olayinka Oke and Nilay Kant
\thanks{The authors are with the Department of Mechanical and Aerospace Engineering, Missouri University of Science and Technology, Rolla, MO, 65409, USA.}
\thanks{The corresponding author is Nilay Kant: 
	{\tt\small {nilaykant@mst.edu}}}}%

\begin{document}

\maketitle

\begin{abstract}
This paper develops an impulse-supervised confined exploration framework for learning local optimal controller for a class of nonlinear systems. The proposed approach combines continuous-time approximate dynamic programming (ADP) with an impulsive supervisory layer, where impulsive braking confines the state within a prescribed region in which a local linear approximation of the nonlinear system is valid. This enables desired persistent excitation required for parameter convergence while preventing large state deviations that invalidate local optimality. The resulting hybrid closed-loop system enforces invariance of the exploration region through state-triggered braking inputs. Simulation results on a nonlinear mechanical system demonstrate effectiveness of the proposed approach.
\end{abstract}

\section{Introduction}
Reinforcement learning (RL) / approximate dynamic programming (ADP), provides a powerful framework for solving optimal control problems in nonlinear systems \cite{lewis2009reinforcement,liu2020adaptive,wang2009adaptive}. In continuous time, ADP enables online learning of optimal control policies by approximating the solution of the Hamilton-Jacobi-Bellman (HJB) equation \cite{bhasin2013novel,vamvoudakis2010online}. Typically, the cost-to go function (optimal value function) is approximated using a neural network (NN) whose weights are learnt via an adaptive critic architecture \cite{lewis2009reinforcement,wang2009adaptive}.

For nonlinear systems in general, the structure of the optimal value function is unknown. Consequently, NN-based value function approximation introduces inherent errors and requires careful, often ad-hoc, selection of the network architecture. In contrast, for linear time-invariant (LTI) systems, the optimal value function admits a closed-form quadratic representation obtained from the Riccati equation, thereby reducing the learning problem to identification of a finite set of optimal weights \cite{lewis2009reinforcement}. Motivated by this distinction, the objective of this paper is to learn a locally optimal control policy for a class of nonlinear systems by confining the state evolution to a neighborhood of the equilibrium, where the dynamics are well approximated by their linearization.

A key challenge in continuous-time ADP is the requirement of persistency of excitation (PE) to ensure convergence of the critic parameters. However, enforcing PE typically induces large state deviations \cite{vamvoudakis2010online}, which may drive the system into regions where the local linear approximation is no longer valid. Several works have sought to relax PE requirements using concurrent learning and experience replay \cite{chowdhary2010concurrent, vamvoudakis2015asymptotically,kokolakis2022online} but they rely on the availability of sufficiently rich off-line data.

We propose an impulse-supervised exploration framework in which impulsive control acts as a supervisory mechanism to confine system trajectories within a prescribed set, where the nonlinear dynamics are well approximated by their linearization. This enables the application of continuous-time ADP with desired PE to learn a locally optimal control policy while maintaining the validity of the linear model. The resulting closed-loop system exhibits hybrid dynamics, characterized by continuous-time evolution and state-triggered resets \cite{alma991007473739706271,yuan2007optimal, yang2002impulsive, haddad2014impulsive, pereira2008invariance}. It is important to distinguish this work from safe RL, where safety is imposed as a system-level constraint. From a control-theoretic perspective, safety is typically formulated as set invariance \cite{ames2019control}, and enforced through constrained RL \cite{ye2024adp,Bossens,Li2023,yang2019safety,liu2025safe}, barrier-function-based methods \cite{ames2019control,cohen2020approximate}, or safe exploration strategies \cite{GARG2025106458}. In contrast, the objective in this work is not safety, but confinement of the state to a neighborhood of the equilibrium in order to learn local optimal control. To this end, impulsive control is used as a supervisory mechanism to enforce state-confinement during learning (exploration). Impulsive control has been widely used in mechanical systems, including experimental validations \cite{jafari2015enlarging,kant2021stabilization,kant2024EHGO,KANT2020104813,kant2025optimal, kant5682696tracking}.

The remainder of this paper is organized as follows. Section II presents the problem formulation and continuous-time ADP framework. Section III characterizes the effect of impulsive inputs. Section IV develops the impulse-supervised safe exploration framework and establishes the main invariance result. Section V presents simulation results, and Section VI concludes the paper along with future directions.

\section{Problem Formulation}

\subsection{System Description}\label{sec2a}

Consider a class of second-order nonlinear systems of the form
\begin{subequations}
\label{eq:nonl_system_2D}
\begin{align}
\dot{x}_1 &= x_2 \label{eq:nonl_system_2D_a} \\
\dot{x}_2 &= f(x_1,x_2) + g(x_1,x_2)u \label{eq:nonl_system_2D_b}
\end{align}
\end{subequations}
where $x_1,x_2 \in \mathbb{R}$ are the state variables and $u \in \mathbb{R}$ is the control input. The function $f:\mathbb{R}^2 \to \mathbb{R}$ denotes the drift dynamics, while $g:\mathbb{R}^2 \to \mathbb{R}$ denotes the state-dependent input gain. It is assumed that $g(x_1,x_2)\neq 0$ for all $(x_1,x_2)$ in the domain of interest. Defining the state vector $x \triangleq [x_1 \; x_2]^\top \in \mathbb{R}^2$, the system dynamics can be compactly written as
\begin{equation}
\dot{x} = F(x) + G(x)u
\label{eq:nonlinear_system}
\end{equation}
where
\begin{equation*}
F(x) \triangleq 
\begin{bmatrix}
x_2 \\
f(x_1,x_2)
\end{bmatrix}, 
\qquad
G(x) \triangleq 
\begin{bmatrix}
0 \\
g(x_1,x_2)
\end{bmatrix}
\end{equation*}

\noindent It is assumed that $G(x)$ is known, bounded, continuous, and locally Lipschitz, while the drift term $F(x)$ is unknown or partially known and is continuous and locally Lipschitz. The origin $x = 0$ is assumed to be a locally asymptotically stable equilibrium point of the unforced system $\dot{x} = F(x)$.

The drift term can be decomposed as
\begin{equation*}
F(x) = Ax + \phi(x)
\end{equation*}
where $A \triangleq \left.\frac{\partial F}{\partial x}\right|_{x=0}$ and $\phi(x) \triangleq F(x) - Ax$ represents higher-order nonlinear terms satisfying $\phi(0)=0$. Since the origin is locally asymptotically stable, the matrix $A$ is Hurwitz. Linearizing \eqref{eq:nonlinear_system} about $(x,u)=(0,0)$ yields the local linear time-invariant (LTI) system
\begin{equation}\label{eq:linear_model}
\dot{x} = Ax + Bu
\end{equation}

\noindent where $B \triangleq G(0)$. The model in \eqref{eq:linear_model} serves as a local approximation of the nonlinear system \eqref{eq:nonlinear_system} and is valid only within a small neighborhood of the origin.

\subsection{Continuous-Time Approximate Dynamic Programming}\label{sec2b}
For the system in \eqref{eq:nonlinear_system}, consider the infinite-horizon quadratic cost functional
\begin{equation}
J(x,u) = \int_{0}^{\infty} \left[ x(\tau)^\top Q x(\tau) + u(\tau)^\top R u(\tau) \right]\, d\tau
\label{eq:cost}
\end{equation}
where $Q \in \mathbb{R}^{2 \times 2}$ is a constant symmetric positive definite matrix and $R \in \mathbb{R}$ is a constant positive scalar. The associated optimal value function is defined by
\begin{equation}
J^*(x) = \inf_{u(\cdot)\in\mathcal{U}} J(x,u)
\label{eq:optimal_cost}
\end{equation}
where $\mathcal{U}$ denotes the set of admissible control inputs, i.e., control signals for which the resulting state trajectory exists, remains bounded, and yields a finite cost.

The Hamiltonian associated with \eqref{eq:nonlinear_system} and \eqref{eq:cost} is given by
\begin{equation}
H(x,u,\nabla J) = x^\top Q x + u^\top R u + \nabla J(x)^\top \left[F(x)+G(x)u\right]
\label{eq:hamiltonian}
\end{equation}

\noindent The optimal value function $J^*(x)$ satisfies the Hamilton-Jacobi-Bellman (HJB) equation:
\begin{equation}
\min_{u\in\mathcal{U}} H\bigl(x,u,\nabla J^*(x)\bigr)=0
\label{eq:hjb}
\end{equation}

\noindent Minimization of \eqref{eq:hamiltonian} with respect to $u$ yields the optimal control policy
\begin{equation}
u^*(x) = -\frac{1}{2}R^{-1}G^\top(x)\nabla J^*(x)
\label{eq:optimal_policy}
\end{equation}

\noindent Since the exact solution of \eqref{eq:hjb} is generally unavailable for nonlinear systems, the infinite horizon optimal policy in \eqref{eq:optimal_policy} is approximated online using approximate dynamic programming (ADP) \cite{lewis2009reinforcement,vamvoudakis2010online,wang2009adaptive} which is reviewed next.

In ADP, the optimal value function $J^*$ can be approximated using a critic neural network \cite{vamvoudakis2010online} of the form
\begin{equation}
\hat{J}(x)=\hat{W}^\top \sigma(x)
\label{eq:value_approx}
\end{equation}

\noindent where $\sigma(x)\in\mathbb{R}^{N}$ is a vector of continuously differentiable basis functions and $\hat{W}\in\mathbb{R}^{N}$ is the estimated critic weight vector. Assuming that $J^*(x)$ is differentiable on a compact set, the optimal value function can be approximated as \cite{vamvoudakis2010online}:
\begin{equation}
J^*(x)=W^\top \sigma(x)+\varepsilon(x)
\label{eq:value_error}
\end{equation}

\noindent where $W\in\mathbb{R}^{N}$ is the constant weight vector and $\varepsilon(x)$ is the approximation error. For a sufficiently large number of neurons $N$, the approximation error in \eqref{eq:value_error} can be made arbitrarily small \cite{vamvoudakis2010online}. Under sufficient persistence of excitation (PE) condition, the weight estimation error is uniformly ultimately bounded, i.e., $\|W - \hat{W}\| = \mathcal{O}(\bar{\varepsilon})$. Replacing $\nabla J^*(x)$ with $\nabla \hat{J}(x)$ in \eqref{eq:optimal_policy} yields the approximate control policy
\begin{equation}
\hat{u}(x) = -\frac{1}{2} R^{-1} G^\top(x)\nabla \hat{J}(x)
\label{eq:approx_policy}
\end{equation}
where
\begin{equation}
\nabla \hat{J}(x) = \nabla \sigma(x)\,\hat{W}
\label{eq:grad_value_estimate}
\end{equation}

\noindent Substituting \eqref{eq:grad_value_estimate} and $\hat u$ into \eqref{eq:hamiltonian} yields the Bellman residual
\begin{equation}
\begin{aligned}
e(x,\hat{W}) &= x^\top Q x + \hat{u}(x)^\top R \hat{u}(x) \\
&\quad + \hat{W}^\top \nabla \sigma(x)\left[F(x) + G(x)\hat{u}(x)\right]
\end{aligned}
\label{eq:approx_hjb}
\end{equation}

\noindent The objective in ADP is to minimize the Bellman residual so that the approximate value function satisfies the HJB equation as closely as possible. This leads to the normalized gradient update law \cite{vamvoudakis2010online}:
\begin{align}\label{eq:weight_update}
\dot{\hat{W}} &= -\eta \, \frac{\alpha(x)}{\left[1 + \alpha^\top(x)\alpha(x)\right]^2} \, e \cr
\alpha(x) &= \nabla \sigma(x)\left[F(x) + G(x)\hat{u}(x)\right]
\end{align}

\noindent where $\eta > 0$ is the adaptation gain.

\subsection{Problem Statement}\label{sec2c}

Let $V:\mathbb{R}^2 \to \mathbb{R}_{\ge 0}$ be a continuously differentiable positive definite function, and define the \emph{confined exploration set}:
\begin{equation}\label{eq:outer_circle}
\mathcal{S}_1 \triangleq \{x \in \mathbb{R}^2 \mid V(x) \le r_1^2\}
\end{equation}
for some sufficiently small $r_1 > 0$. The set $\mathcal{S}_1$ represents a neighborhood of the origin within which the linear model \eqref{eq:linear_model} provides a sufficiently accurate approximation of the nonlinear system \eqref{eq:nonlinear_system}. For the LTI system \eqref{eq:linear_model}, the optimal value function in \eqref{eq:optimal_cost} is known in closed form via the algebraic Riccati equation and admits the quadratic representation:
\begin{align} \label{eq:linear_weights_ideal}
J^*(x) = W^\top \sigma(x) \cr
W = \begin{bmatrix} w_1 & w_2 & w_3 \end{bmatrix}^\top \,\, \sigma(x) = &\begin{bmatrix} x_1^2 & x_1 x_2 & x_2^2 \end{bmatrix}^\top
\end{align}

A key requirement in continuous-time ADP is persistence of excitation (PE) of the regressor to ensure convergence of the critic weights $\hat{W}$ to the ideal weights $W$, obtained via the update law in \eqref{eq:weight_update}. In linear systems, PE can be enforced through sufficiently rich input signals; however, such excitation typically results in large state deviations (see for example the simulation results in \cite{vamvoudakis2010online}). While such deviations are acceptable for LTI systems, they are problematic for \eqref{eq:nonlinear_system}, where the linear approximation is only valid locally. In particular, aggressive excitation may drive the state outside $\mathcal{S}_1$, thereby invalidating the LTI model.
The objective of this paper is to learn the optimal control policy associated with the local linear model \eqref{eq:linear_model} while ensuring that the state trajectory remains within $\mathcal{S}_1$ for all time. To achieve this, impulsive input is employed as a supervisory mechanism to make the \emph{confined exploration set} invariant. The effect of impulsive inputs on system dynamics is presented next.

\section{Effect of Impulsive Input on System Dynamics} \label{Sec2c}

Suppose the control input $u(t)$ in \eqref{eq:nonlinear_system} consists of impulsive inputs applied at isolated time instants $t=\tau_k$, $k=1,2,\dots$. At $t=\tau_k$, the control input is modeled as
\begin{equation}
u(t) = I_k \,\delta(t-\tau_k)
\label{eq:impulsive_input}
\end{equation}
where $I_k \in \mathbb{R}$ denotes the impulse of the impulsive input $u$, and $\delta(\cdot)$ is the Dirac delta function.

To characterize the effect of the impulsive input, integrate \eqref{eq:nonlinear_system} over the interval $[\tau_k^-,\tau_k^+]$:
\begin{equation}
\int_{\tau_k^-}^{\tau_k^+} \dot{x}(t)\,dt
=
\int_{\tau_k^-}^{\tau_k^+} F(x)\,dt
+
\int_{\tau_k^-}^{\tau_k^+} G(x)\,u(t)\,dt.
\label{eq:x_integral}
\end{equation}

\noindent Let $x^- \triangleq x(\tau_k^-)$ and $x^+ \triangleq x(\tau_k^+)$. Since $F(x)$ and $G(x)$ are continuous, they do not change over the infinitesimal duration $[\tau_k^-,\tau_k^+]$. Using this fact, together with the sifting property of the Dirac delta function in \eqref{eq:x_integral} yields
\begin{equation}
x^+ - x^- = G(x^-)\,I_k
\label{eq:impulse_effect}
\end{equation}

\noindent Using the structure of $G(x)$ from \eqref{eq:nonlinear_system} above, we get
\begin{equation}\label{eq:component-jump}
x_1^+ = x_1^-,
\qquad
x_2^+ = x_2^- + g(x_1^-, x_2^-)\,I_k
\end{equation}

\noindent Thus, the impulsive input\footnote{Impulsive inputs have been extensively employed for the control of mechanical systems; see, for example, \cite{jafari2015enlarging, kant2021stabilization, kant2024EHGO, KANT2020104813, kant2025optimal, kant5682696tracking}.} leaves $x_1$ unchanged and induces an instantaneous jump in $x_2$. If $I_k$ is chosen such that $x_2^+ = 0$, the input acts as a braking impulse, yielding
\begin{equation}
x^+ =
\begin{bmatrix}
x_1^- &
0
\end{bmatrix}^\top
\label{eq:braking_jump}
\end{equation}

\begin{lemma}\label{lemma1}
Consider the impulsive braking map \eqref{eq:braking_jump} and let $V(x)=x_1^2 + x_2^2$. Then, for any $x^- \in \mathbb{R}^2$,
\begin{equation}
V(x^+) \leq V(x^-)
\end{equation}
with equality if and only if $x_2^- = 0$.
\end{lemma}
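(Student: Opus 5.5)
The plan is a direct computation of $V$ before and after the jump, using only the braking map \eqref{eq:braking_jump}. First, I will record the pre-impulse state $x^- = [x_1^- \; x_2^-]^\top$. By \eqref{eq:component-jump}, with the impulse $I_k$ chosen so that $x_2^+=0$, the post-impulse state is $x^+ = [x_1^- \; 0]^\top$. Such an $I_k$ always exists, namely $I_k = -x_2^-/g(x_1^-,x_2^-)$, because $g \neq 0$ on the domain of interest.

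Next, I will evaluate $V(x)=x_1^2+x_2^2$ at both states. This gives $V(x^+) = (x_1^-)^2$ and $V(x^-) = (x_1^-)^2 + (x_2^-)^2$, so
\begin{equation*}
V(x^-) - V(x^+) = (x_2^-)^2 \ge 0 .
\end{equation*}
The inequality $V(x^+)\le V(x^-)$ follows immediately.

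For the equality clause, I will note that $V(x^-) - V(x^+) = 0$ exactly when $(x_2^-)^2 = 0$, that is, when $x_2^- = 0$. This settles both directions of the ``if and only if.''

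No step is a genuine obstacle. The only point to state explicitly is that the braking map is well defined (existence of $I_k$ via $g\neq 0$), and that $x_1$ is unchanged across the impulse, which is exactly \eqref{eq:component-jump}.
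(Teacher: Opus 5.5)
Your proposal is correct and matches the paper's proof, which is the same direct computation $V(x^+) = (x_1^-)^2 \le (x_1^-)^2 + (x_2^-)^2 = V(x^-)$. Writing the difference as $V(x^-)-V(x^+) = (x_2^-)^2$ also settles the equality clause, which the paper's proof leaves implicit.
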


\begin{proof}
Since $x_1^+ = x_1^-$ and $x_2^+ = 0$, we have
\[
V(x^+) = (x_1^-)^2 \le (x_1^-)^2 + (x_2^-)^2 = V(x^-)
\]
\end{proof}

\begin{corollary}\label{corr:1}
If $x^- \in \mathcal{S}_1$, then the impulsive braking map \eqref{eq:braking_jump} ensures that $x^+ \in \mathcal{S}_1$.
\end{corollary}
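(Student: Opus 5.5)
The corollary follows by chaining Lemma~\ref{lemma1} with the sublevel-set description of $\mathcal{S}_1$ in \eqref{eq:outer_circle}. Throughout I take $V(x)=x_1^2+x_2^2$, the same function used in Lemma~\ref{lemma1}, so that $\mathcal{S}_1=\{x\in\mathbb{R}^2 \mid x_1^2+x_2^2\le r_1^2\}$ is a closed disk of radius $r_1$ about the origin.

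First, I assume $x^-\in\mathcal{S}_1$, which by definition means $V(x^-)\le r_1^2$. Next, I apply the braking map \eqref{eq:braking_jump} to get $x^+=[x_1^-\;0]^\top$. Lemma~\ref{lemma1}, which holds for every $x^-\in\mathbb{R}^2$, then gives $V(x^+)\le V(x^-)$. Combining the two inequalities yields $V(x^+)\le V(x^-)\le r_1^2$, so $x^+\in\mathcal{S}_1$.

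There is no substantive obstacle. The only point to make explicit is that the $V$ defining $\mathcal{S}_1$ must be the quadratic $V$ of Lemma~\ref{lemma1}, since for a general positive definite $V$ the monotonicity under braking could fail. I would also note that the braking impulse exists: $I_k=-x_2^-/g(x_1^-,x_2^-)$ is well defined because $g\neq 0$ on the domain of interest.
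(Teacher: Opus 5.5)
Your proposal is correct and follows the paper's proof: both chain $V(x^+)\le V(x^-)$ from Lemma~\ref{lemma1} with $V(x^-)\le r_1^2$ to conclude $x^+\in\mathcal{S}_1$. Your remark that the $V$ defining $\mathcal{S}_1$ must be the quadratic $V$ of Lemma~\ref{lemma1} is a worthwhile clarification, since the paper leaves this identification implicit and it genuinely fails for a general positive definite $V$ (e.g., a quadratic form with cross terms).
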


\begin{proof}
Since $V(x^+) \le V(x^-)$ and $x^- \in \mathcal{S}_1$, it follows that $V(x^+) \le r_1^2$, and hence $x^+ \in \mathcal{S}_1$.
\end{proof}

The above results shows that the impulsive braking input is non-expansive with respect to the Lyapunov function $V(x)$ and strictly reduces $V(x)$ whenever $x_2^- \neq 0$. In this paper, impulsive braking inputs are exploited in conjunction with continuous-time ADP, resulting in a hybrid dynamical system with continuous flows and state-triggered jumps. This is the main result of this paper and is presented next.

\section{ADP with Impulse Supervised Confined Exploration}\label{sec4}
\subsection{Geometric Overview}\label{sec4a}

The hybrid control architecture is illustrated in Fig.~\ref{Fig1}, and depicts the phase portrait of \eqref{eq:nonlinear_system}. The outer set $\mathcal{S}_1$ represents the confined exploration set in \eqref{eq:outer_circle}. The inner set
\begin{equation}
\mathcal{S}_2 \triangleq \{x \in \mathbb{R}^2 : V(x) \leq \beta r_1^2\}, \quad \beta \in (0,1)
\label{eq:inner_circle}
\end{equation}

\noindent is a circular region such that $\mathcal{S}_2 \subset \mathcal{S}_1$. To characterize the effect of impulsive resets on the boundary, we define
\begin{equation}
\mathcal{C}_1 \triangleq \{ x \in \partial \mathcal{S}_1 : x^+ \in \mathcal{S}_2 \}
\label{eq:C1}
\end{equation}
\begin{equation}
\mathcal{C}_2 \triangleq \partial \mathcal{S}_1 \setminus \mathcal{C}_1
\label{eq:C2}
\end{equation}

\noindent where $x^+$ is the post-jump state due to the application of impulsive braking input. Thus, $\mathcal{C}_1$ corresponds to boundary points for which the reset maps the state into $\mathcal{S}_2$ (as impulsive braking does not cause any change in $x_1$), while $\mathcal{C}_2$ corresponds to points for which the reset remains in $\mathcal{S}_1 \setminus \mathcal{S}_2$.

\begin{figure}[b!]
 \centering
\psfrag{A}[c][c][1][0]{\small $x_1$}
\psfrag{B}[c][c][1][0]{\small $\mathcal{S}_2$}
\psfrag{C}[c][c][1][0]{\small $x_0$}
\psfrag{D}[c][c][1][0]{\small $B$}
\psfrag{F}[c][c][1][0]{\small  $C$}
\psfrag{Q}[c][c][1][0]{\small $D$}
\psfrag{G}[c][c][1][0]{\small $E$}
\psfrag{H}[c][c][1][0]{\small $x_2$}
\psfrag{I}[c][c][1][0]{\small  $\mathcal{S}_1$}
\psfrag{J}[c][c][1][0]{\small $A$}

 \includegraphics[width=0.8\linewidth]{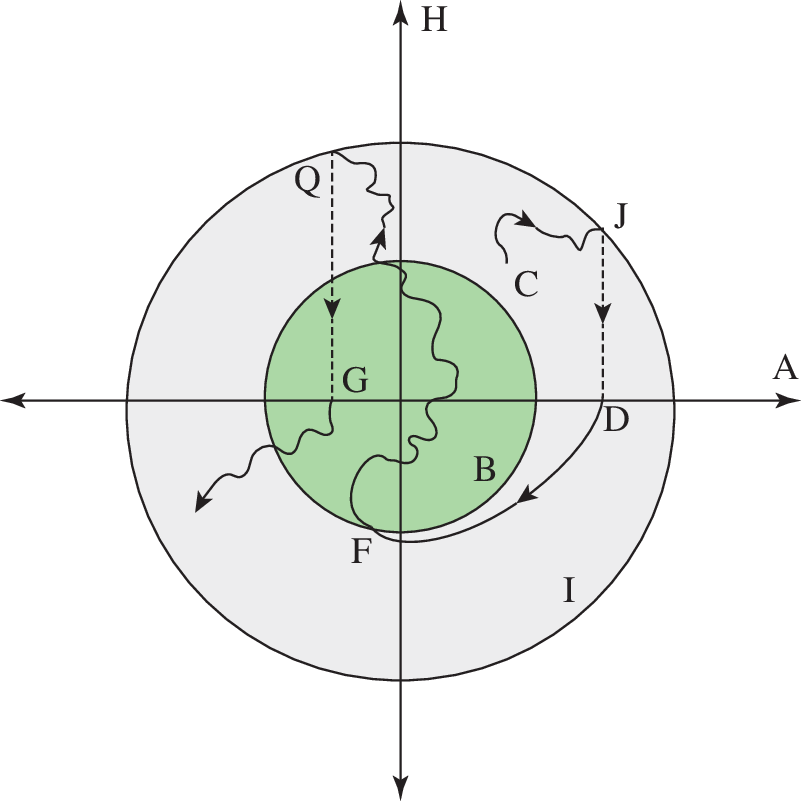}
\caption{Phase portrait illustrating impulse-supervised confined exploration. The state trajectory is confined within $\mathcal{S}_1$ through impulsive resets, while persistent excitation enables exploration and learning of the optimal linear control.}
 \label{Fig1}
 \end{figure}

Consider an initial condition $x(t_0) \in \mathring{\mathcal{S}}_1$, as shown in Fig.~\ref{Fig1}. Here, $\mathring{\mathcal{S}}_1$ denote the interior of the confined exploration set. During learning, the control input consists of the ADP input $\hat{u}(x)$ augmented with a persistently exciting signal which results in exploration within $\mathcal{S}_1$. Suppose, the trajectory reaches the boundary of $\mathcal{S}_1$, defined as $\partial \mathcal{S}_1$, at point $A$. At this instant, an impulsive braking input is applied, which resets the velocity to $x_2^+ = 0$ as per \eqref{eq:braking_jump}, while $x_1$ remains unchanged. This produces an instantaneous jump to a point $B$ in the interior of $\mathcal{S}_1$. Following this, $\hat u$ is set to zero and the weight update in \eqref{eq:weight_update} is suspended. The system then evolves under the autonomous dynamics $\dot{x} = F(x)$, and the trajectory converges toward the origin due to local asymptotic stability of $x = 0$ until it reaches the inner boundary $\partial \mathcal{S}_2$ at point $C$. At this point, $\hat u$ and weight adaptation are reactivated, and exploration resumes. As the trajectory continues to evolve under excitation, now suppose it reaches $\partial \mathcal{S}_1$ again at point $D$. An impulsive braking input is applied at this instant. Depending on the location of $D$ on $\partial \mathcal{S}_1$, the reset may either map the state into $\mathcal{S}_2$ (when $D \in \mathcal{C}_1$, as shown in Fig.~\ref{Fig1}) or into $\mathcal{S}_1 \setminus \mathcal{S}_2$ (when $D \in \mathcal{C}_2$). In the former case, $\hat{u}(x)$ and the weight adaptation continues without interruption, whereas in the latter case the ADP and weight adaptation are suspended. This sequence of continuous flows and impulsive resets repeats, enabling the trajectory to explore the state space while remaining confined within $\mathcal{S}_1$. Thus, the closed-loop system exhibits hybrid behavior, characterized by continuous flows, discrete state resets, and switching of weight adaptation governed by state-dependent events.\

\subsection{Hybrid Automaton}

Based on the hybrid control architecture described in the previous subsection, the resulting closed-loop system is modeled as a hybrid automaton $\mathcal{H}$ \cite{lygeros2003dynamical} of the form
\begin{equation}\label{eq:hybrid_automaton}
\mathcal{H} = (Q, X, \mathcal{F}, \mathrm{Init}, \mathrm{Dom}, E, \mathcal{G}, \mathcal{R})
\end{equation}
where each component is defined in accordance with the phases of operation illustrated in the geometric overview. The directed graph representation of the hybrid automaton $\mathcal{H}$ is depicted in Fig.~\ref{Fig2}.

We represent the discrete mode set by
\[
Q = \{ q_1, q_2, q_3, q_4 \}
\]
These modes correspond to the different regimes of evolution introduced earlier. In particular, mode $q_1$ represents motion inside the inner set $\mathcal{S}_2$, where the control input and weight adaptation remain active. Mode $q_2$ corresponds to the exploration region $\mathring{\mathcal{S}}_1 \setminus \mathcal{S}_2$, where the control input and weight adaptation are also active prior to the application of impulsive braking. This mode captures the outward evolution of trajectories from $\mathcal{S}_2$ under persistent excitation until the boundary $\partial \mathcal{S}_1$ is reached. Mode $q_3$ represents the boundary regime associated with $\partial \mathcal{S}_1$, where trajectories may evolve tangentially along the boundary. Although this behavior is rare, it is included for completeness of the hybrid description and does not itself trigger a reset. Finally, mode $q_4$ corresponds to the post-brake recovery phase, \emph{i.e.}, following the application of impulsive braking at $\partial \mathcal{S}_1$, the state evolves inside $\mathring{\mathcal{S}}_1 \setminus \mathcal{S}_2$ under the unforced dynamics, with both control and weight adaptation turned off, until the trajectory re-enters the inner set $\mathcal{S}_2$.

Let the continuous state space be denoted by $X = \mathbb{R}^2$, with $x \in X$. Let
\begin{equation}\label{eq:input_with_PE}
u_c(x,t) \triangleq \hat{u}(x) + u_{\mathrm{PE}}(t)
\end{equation}

\noindent where $u_{\mathrm{PE}}(t)$ is a bounded persistently exciting signal introduced to ensure excitation of the regressor for critic adaptation. The continuous dynamics depend on the active mode and is described by the vector field $\mathcal{F}: Q \times X \!\to\! X$. As per the mode definitions, the vector fields can be expressed as:
\begin{align*}
\mathcal{F}(q_1,x) &= F(x) + G(x)u_c(x,t), \\
\mathcal{F}(q_2,x) &= F(x) + G(x)u_c(x,t), \\
\mathcal{F}(q_3,x) &= F(x) + G(x)u_c(x,t), \\
\mathcal{F}(q_4,x) &= F(x)
\end{align*}
Thus, the continuous control input $u_c(x,t)$ is active in modes $q_1$, $q_2$, and $q_3$, while mode $q_4$ corresponds to unforced evolution following impulsive braking.

The initial set $\rm{Init}$ is assumed to lie in mode $q_1$ when $x \in \mathcal{S}_2 $ or in mode $q_2$ when $x \in \mathring{\mathcal{S}_1} \setminus \mathcal{S}_2$.
The domains of continuous evolution are determined by the regions introduced earlier. Specifically,
\begin{alignat*}{2}
\mathrm{Dom}(q_1) &= \mathcal{S}_2, \qquad 
&\mathrm{Dom}(q_2) &= \mathring{\mathcal{S}}_1 \setminus \mathcal{S}_2 \\
\mathrm{Dom}(q_3) &= \partial \mathcal{S}_1, \qquad 
&\mathrm{Dom}(q_4) &= \mathring{\mathcal{S}}_1 \setminus \mathcal{S}_2
\end{alignat*}

\noindent The admissible transitions between modes are captured by the edge set
\begin{align*}
E &= \{(q_1,q_2), (q_2,q_1), (q_2,q_3), (q_3,q_2), \\
  &\quad (q_3,q_1), (q_3,q_4), (q_4,q_1)\}
\end{align*}
The transitions $(q_1,q_2)$ and $(q_2,q_1)$ describe crossings of the inner boundary $\partial \mathcal{S}_2$, the transition $(q_2,q_3)$ corresponds to reaching the outer boundary $\partial \mathcal{S}_1$, and the transitions originating from $q_3$ determine whether the state continues to evolve on or re-enters the interior of $\mathcal{S}_1$, or instead undergoes impulsive braking. The transition $(q_4,q_1)$ captures re-entry into the inner region $\mathcal{S}_2$ during the post-brake recovery phase.

The switching conditions are described through guard sets $\mathcal{G}(q_i,q_j)\subset X$. These guards are defined in accordance with the geometric regions introduced earlier and with the sign of
\[
\dot V(x)=\nabla V(x)^\top \mathcal{F}(q,x)
\]
In particular,


\begin{align*}
\mathcal{G}(q_1,q_2) &= \left\{ x \in \mathbb{R}^2 : V(x) \ge r_2^2,\dot{V}(x) > 0 \right\}, \\
\mathcal{G}(q_2,q_1) &= \left\{ x \in \mathbb{R}^2 : V(x) \le r_2^2, \dot{V}(x) \le 0 \right\}, \\
\mathcal{G}(q_2,q_3) &= \left\{ x \in \mathbb{R}^2 : V(x) = r_2^2, \dot{V}(x) = 0 \right\}, \\
\mathcal{G}(q_3,q_2) &= \left\{ x \in \mathbb{R}^2 : x \in \mathring{\mathcal{S}}_1,\dot{V}(x) < 0 \right\}, \\
\mathcal{G}(q_3,q_1) &= \left\{ x \in \mathcal{C}_1 : \dot{V}(x) > 0 \right\}, \\
\mathcal{G}(q_3,q_4) &= \left\{ x \in \mathcal{C}_2 : \dot{V}(x) > 0 \right\}, \\
\mathcal{G}(q_4,q_1) &= \left\{ x \in \mathbb{R}^2 : V(x) \le r_2^2, \dot{V}(x) < 0 \right\}.
\end{align*}

\begin{figure}[t!]
\vspace{0.1in}
 \centering
\psfrag{A}[c][c][1][0]{\small $q_1$}
\psfrag{B}[c][c][1][0]{\small $\dot x = F(x) + G(x) u_c$}
\psfrag{C}[c][c][1][0]{\small $x \in \mathcal{S}_2$}
\psfrag{D}[c][c][1][0]{\small $q_2$}
\psfrag{E}[c][c][1][0]{\small $\dot x = F(x) + G(x) u_c$}
\psfrag{F}[c][c][1][0]{\small  $x \in \mathring{\mathcal{S}}_1 \setminus \mathcal{S}_2$}
\psfrag{G}[c][c][1][0]{\small $q_3$}
\psfrag{H}[c][c][1][0]{\small $\dot x = F(x) + G(x) u_c$}
\psfrag{I}[c][c][1][0]{\small  $x \in \partial \mathcal{S}_1$}
\psfrag{J}[c][c][1][0]{\small $q_4$}
\psfrag{K}[c][c][1][0]{\small $\dot x = F(x)$}
\psfrag{L}[c][c][1][0]{\small  $x \in \mathring{\mathcal{S}}_1 \setminus \mathcal{S}_2$}
\psfrag{R}[c][c][1][0]{\small $\mathcal{G}(q_1,q_2)$}
\psfrag{S}[c][c][1][0]{\small $\mathcal{G}(q_2,q_1)$}
\psfrag{M}[c][c][1][0]{\small $\mathcal{G}(q_3,q_2)$}
\psfrag{N}[c][c][1][0]{\small $\mathcal{G}(q_2,q_3)$}
\psfrag{P}[c][c][1][0]{\small $\mathcal{G}(q_3,q_1)$}
\psfrag{Q}[c][c][1][0]{\small $\mathcal{G}(q_4,q_1)$}
\psfrag{O}[c][c][1][0]{\small $\mathcal{G}(q_3,q_4)$}
\psfrag{V}[c][c][1][0]{\small $\mathcal{R}(q_1,q_2,x)$}
\psfrag{W}[c][c][1][0]{\small $\mathcal{R}(q_2,q_1,x)$}
\psfrag{Z}[c][c][1][0]{\small $\mathcal{R}(q_3,q_1,x)$}
\psfrag{U}[c][c][1][0]{\small $\mathcal{R}(q_2,q_3,x)$}
\psfrag{Y}[c][c][1][0]{\small $\mathcal{R}(q_3,q_4,x)$}
\psfrag{X}[c][c][1][0]{\small $\mathcal{R}(q_4,q_1,x)$}
\psfrag{@}[c][c][1][0]{\small $\mathcal{R}(q_3,q_1,x)$}
\psfrag{#}[c][c][1][0]{\small $x_0 \in \mathcal{S}_2$}
\psfrag{*}[c][c][1][0]{\small $x_0 \in \mathring{\mathcal{S}}_1 \setminus \mathcal{S}_2$}
 \includegraphics[width=0.9\linewidth]{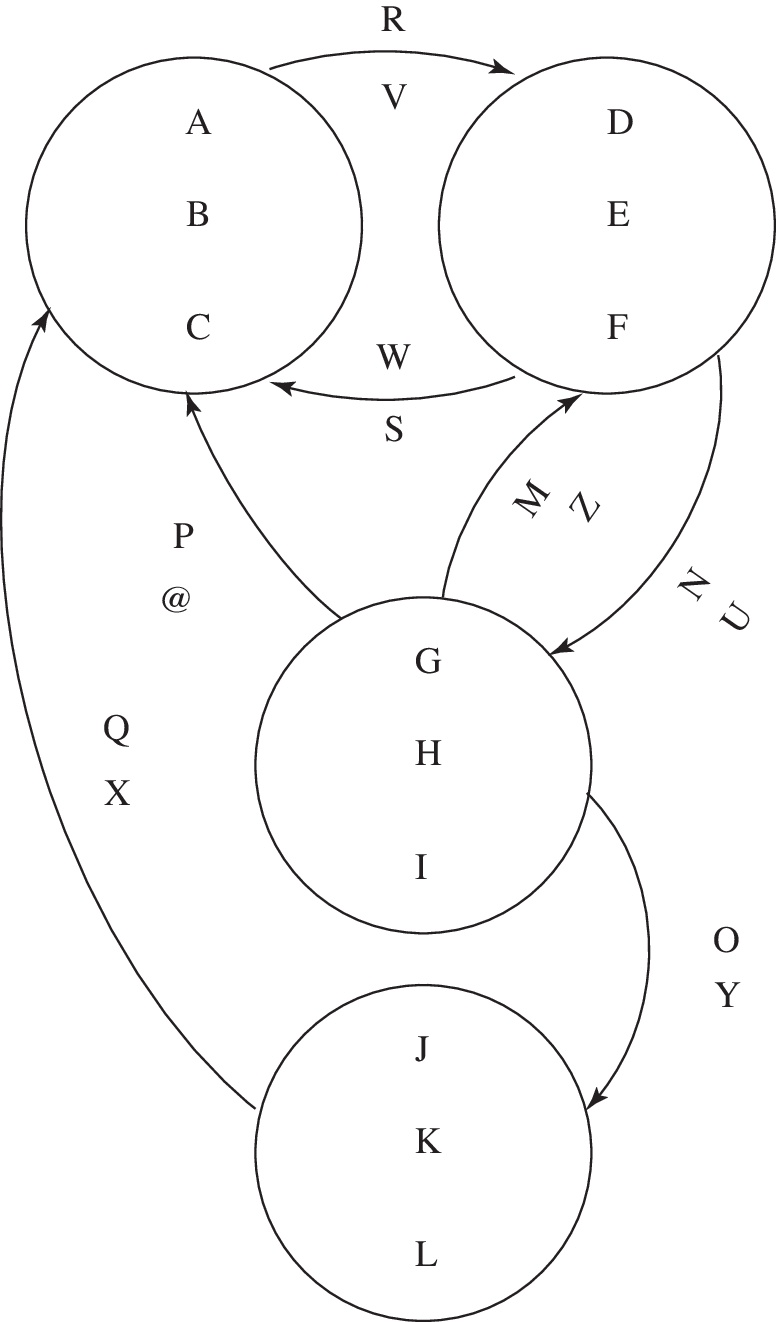}
\caption{The hybrid automaton $\mathcal{H}$.}
 \label{Fig2}
 \end{figure}

To formalize the inward motion after braking, we impose the following assumption.
\begin{assumption}\label{ass:V_lyap}
The function $V$ used to define the sets $\mathcal{S}_1$ and $\mathcal{S}_2$ is a local Lyapunov function for the unforced dynamics $\dot{x} = F(x)$, \emph{i.e.},
\[
\dot V = \nabla V(x)^\top F(x) < 0, \qquad \forall x \in \mathcal{S}_1 \setminus \{0\}
\]
\end{assumption}

\begin{remark}
Assumption~\ref{ass:V_lyap} is standard and can be satisfied by selecting $r_1 \in \mathcal{S}_1$ sufficiently small, since the origin is a locally asymptotically stable equilibrium of the unforced dynamics.
\end{remark}

The reset map $\mathcal{R}$ captures the effect of impulsive braking. Braking occurs only when the state lies on $\partial \mathcal{S}_1$ and the trajectory has an outward motion, that is, along transitions originating from mode $q_3$ for which $\dot V(x)>0$. In this case, the post-jump state is given by
\begin{align*}
\mathcal{R}(q_3,q_1,x) &= \begin{bmatrix} x_1 & 0 \end{bmatrix}^T, \quad
\mathcal{R}(q_3,q_4,x) = \begin{bmatrix} x_1 & 0 \end{bmatrix}^T
\end{align*}
while all other transitions correspond to identity maps, i.e.,
\[
\mathcal{R}(q_i,q_j,x)=x,
\qquad \forall (q_i,q_j)\in E\setminus\{(q_3,q_1),(q_3,q_4)\}
\]

\noindent Following the reset, if the post-jump state lies in $\mathcal{S}_2$, the automaton transitions to mode $q_1$. Otherwise, it transitions to mode $q_4$, where both the control input and weight adaptation are turned off and the system evolves under the autonomous dynamics $\dot{x}=F(x)$. By Assumption~\ref{ass:V_lyap}, one has $\dot V(x)=\nabla V(x)^\top F(x)<0$ for all $x\in \mathcal{S}_1\setminus\{0\}$, and hence the recovery phase drives the state toward the interior of $\mathcal{S}_1$ until $\mathcal{S}_2$ is re-entered. This is consistent with Lemma~\ref{lemma1} and Corollary~\ref{corr:1}, which ensure that impulsive braking does not increase $V(x)$.

The critic weights evolve according to a mode-dependent adaptation law. In particular, the update law is active in modes $q_1$, $q_2$, and $q_3$, and is suspended only in mode $q_4$. Thus,
\begin{equation}
\dot{\hat W} =
\begin{cases}
-\eta\displaystyle\frac{\alpha(x)}{\left(1+\alpha^\top(x)\alpha(x)\right)^2}\,e(x,\hat W)
& q \in \{q_1,q_2,q_3\} \\[8pt]
0 & q = q_4
\end{cases}
\label{eq:hybrid_weight_update}
\end{equation}
where $e(x,\hat W)$ is the Bellman residual defined in \eqref{eq:approx_hjb}.

\begin{theorem}[Positive invariance of $\mathcal{S}_1$]
Consider the hybrid automaton $\mathcal{H}$ in \eqref{eq:hybrid_automaton} associated with system \eqref{eq:nonlinear_system}. Let the initial condition satisfy $x_0 \triangleq x(0) \in \mathring{\mathcal{S}}_1$, and Assumption~\ref{ass:V_lyap} holds. Then, the confined exploration set $\mathcal{S}_1$ is positively invariant for the resulting closed-loop hybrid system, i.e.,
\[
x(t) \in \mathcal{S}_1, \qquad \forall t \ge 0
\]
\end{theorem}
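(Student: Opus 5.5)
The plan is to argue along a hybrid execution of $\mathcal{H}$, treating it as an alternating sequence of continuous flow intervals $[\tau_k,\tau_{k+1}]$ and discrete transitions at the instants $\tau_k$. I will show by induction on $k$ that if $x(\tau_k^+)\in\mathcal{S}_1$, then $x(t)\in\mathcal{S}_1$ on the whole next flow interval and the post-transition state $x(\tau_{k+1}^+)$ again lies in $\mathcal{S}_1$. The base case is immediate from $x_0\in\mathring{\mathcal{S}}_1$ and the definition of $\mathrm{Init}$, which places the state in $q_1$ or $q_2$.

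\textbf{Flows.} In each mode the domain is contained in $\mathcal{S}_1$: $\mathrm{Dom}(q_1)=\mathcal{S}_2\subset\mathcal{S}_1$, $\mathrm{Dom}(q_2)=\mathrm{Dom}(q_4)=\mathring{\mathcal{S}}_1\setminus\mathcal{S}_2$, and $\mathrm{Dom}(q_3)=\partial\mathcal{S}_1$. Since the automaton may only flow while $x$ remains in $\mathrm{Dom}(q)$, containment during flow reduces to ruling out an exit from $\mathcal{S}_1$ without a transition.

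The only way to leave $\mathcal{S}_1$ by continuous motion is to cross $\partial\mathcal{S}_1$ with $\dot V>0$. Because $V$ is $C^1$ and the vector fields are locally Lipschitz, the trajectory $V(x(t))$ is continuous. Hence any trajectory in $q_2$ that would leave $\mathring{\mathcal{S}}_1$ must first reach $\partial\mathcal{S}_1$, where it triggers the transition to $q_3$. In $q_3$ there are two cases. If $\dot V\le0$, the state stays on the boundary or re-enters the interior, via $\mathcal{G}(q_3,q_2)$. If $\dot V>0$, the guards $\mathcal{G}(q_3,q_1)$ or $\mathcal{G}(q_3,q_4)$ are enabled, and since $\mathcal{C}_1\cup\mathcal{C}_2=\partial\mathcal{S}_1$ one of them always applies.

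In $q_4$, Assumption~\ref{ass:V_lyap} gives $\dot V=\nabla V^\top F<0$ on $\mathcal{S}_1\setminus\{0\}$. Therefore $V(x(t))$ is strictly decreasing, the state stays in $\mathcal{S}_1$, and it cannot reach $\partial\mathcal{S}_1$. In $q_1$, continuity of $V$ means the state must pass through $\partial\mathcal{S}_2$, and hence switch to $q_2$, before approaching $\partial\mathcal{S}_1$, because $\beta<1$.

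\textbf{Jumps.} All resets except $\mathcal{R}(q_3,q_1,\cdot)$ and $\mathcal{R}(q_3,q_4,\cdot)$ are the identity, so they preserve membership in $\mathcal{S}_1$ trivially. The two braking resets are applied at $x^-\in\partial\mathcal{S}_1\subset\mathcal{S}_1$, so Corollary~\ref{corr:1} (via Lemma~\ref{lemma1}) gives $x^+\in\mathcal{S}_1$. For the $q_4$ branch I also want $x^+\in\mathring{\mathcal{S}}_1$. Here $\dot V>0$ at $x^-$ together with Assumption~\ref{ass:V_lyap} forces $G(x^-)u_c$ to contribute positively; for the circular $V$ this requires $x_2^-\neq0$. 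Lemma~\ref{lemma1} then gives strict decrease, so the post-jump state is consistent with $\mathrm{Dom}(q_4)$ or $\mathrm{Dom}(q_1)$ according to the definition of $\mathcal{C}_1$ and $\mathcal{C}_2$. Lemma~\ref{lemma1} is stated for $V=x_1^2+x_2^2$, so I will take that $V$, or note that the argument only needs the non-expansiveness of the reset.

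\textbf{Main obstacle.} The hard part is the flow step at the boundary: showing that no trajectory slips out of $\mathcal{S}_1$ between reaching $\partial\mathcal{S}_1$ and the firing of a braking guard. I will handle this by a contradiction argument. Suppose $t^\ast=\sup\{t: x(s)\in\mathcal{S}_1\ \forall s\le t\}$ is finite and is not a jump time. Then $V(x(t^\ast))=r_1^2$ and $V(x(t))>r_1^2$ for $t$ slightly larger than $t^\ast$. But the state is in $q_3$ at $t^\ast$, and the transition semantics force a reset as soon as $\dot V>0$. In the tangential case $\dot V=0$, the $q_3$ dynamics are restricted to $\mathrm{Dom}(q_3)=\partial\mathcal{S}_1$, so flow there keeps $V=r_1^2$. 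Either way this contradicts the definition of $t^\ast$. I will also assume non-Zeno executions, or note that the recovery in $q_4$ takes positive time, so that the induction covers all $t\ge0$.
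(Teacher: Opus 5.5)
Your route is the paper's route. Flows cannot leave $\mathcal{S}_1$ without first reaching $\partial\mathcal{S}_1$; an outward-moving boundary point triggers a brake, which is handled by Lemma~\ref{lemma1} and Corollary~\ref{corr:1}; and the recovery mode $q_4$ is handled by Assumption~\ref{ass:V_lyap}. Three of your additions tighten the paper's sketch: the exit-time contradiction, the remark that $\dot V>0$ forces $x_2^-\neq 0$ (so the post-brake state lies strictly inside $\mathcal{S}_1$, as $\mathrm{Dom}(q_4)$ requires), and the non-Zeno caveat.

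The genuine gap is the one you flagged yourself, and the paper's proof has it too. The argument needs a single $V$ that is both non-expanded by braking and strictly decreasing along $F$ on $\mathcal{S}_1\setminus\{0\}$. For systems with $\dot x_1=x_2$, no $C^1$ function has both properties.

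If you take $V=x_1^2+x_2^2$, then $\nabla V^\top F=2x_2\bigl(x_1+f(x_1,x_2)\bigr)$ vanishes on the whole $x_1$-axis. So Assumption~\ref{ass:V_lyap} can never hold, and the theorem is only vacuously true. This is not a mere technicality. For the linearization of the simulated system, a post-brake state $(a,0)$ gives $x_2\approx-2at$ and $\dot V=-2x_2(x_1+x_2)\approx 4a^2t>0$. Hence after a brake at a boundary point with small $|x_2^-|$, the unforced $q_4$ flow pushes $V$ above $r_1^2$.

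If instead you keep a general $V$ satisfying Assumption~\ref{ass:V_lyap} and rely only on non-expansiveness of the reset, that property is not available. Let $\bar x_1=\max_{x\in\mathcal{S}_1}x_1$, attained at some $x^\star\in\partial\mathcal{S}_1$. Suppose braking kept $x^\star$ in $\mathcal{S}_1$. Then $(\bar x_1,0)$ would also maximize $x_1$ over $\{V\le r_1^2\}$. This forces $\partial V/\partial x_2=0$ there, since otherwise continuity or the implicit function theorem yields points of $\mathcal{S}_1$ with $x_1>\bar x_1$. Consequently $\nabla V^\top F=\frac{\partial V}{\partial x_1}x_2+\frac{\partial V}{\partial x_2}f=0$ at the nonzero point $(\bar x_1,0)$, contradicting the assumption. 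By closedness of $\mathcal{S}_1$, braking then leaves $\mathcal{S}_1$ from every boundary point with $x_1$ close to $\bar x_1$.

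Concretely, for the linearization of the example, $A^\top P+PA=-I$ gives $P=\begin{bmatrix}7/4&1/4\\1/4&3/4\end{bmatrix}$. For $V=x^\top Px$, every boundary point with $x_1>0$ and $-\tfrac13x_1<x_2<0$ has $V(x_1,0)=\tfrac74x_1^2>V(x)$ and $\partial V/\partial x_2>0$. So a large enough $u_c$ of the right sign fires a brake there, and the reset exits $\mathcal{S}_1$.

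A repair is to replace Assumption~\ref{ass:V_lyap} by the non-strict condition $\nabla V^\top F\le 0$ on $\mathcal{S}_1$, for a $V$ with $V(x_1,0)\le V(x_1,x_2)$. An example is the energy-type $V(x)=x_2^2-2\int_0^{x_1}f(s,0)\,ds$, for which $\nabla V^\top F=2x_2\bigl(f(x_1,x_2)-f(x_1,0)\bigr)$. This is $\le 0$ near the origin when $f$ is $C^1$, because Hurwitz $A$ gives $\partial f/\partial x_2(0)<0$. Invariance in $q_4$ needs only that $V$ be nonincreasing, so the rest of your argument goes through unchanged. Strict decrease at a brake still holds: $\partial V/\partial x_2=2x_2$ makes $\dot V>0$ impossible when $x_2^-=0$, and braking lowers $V$ by exactly $(x_2^-)^2$.
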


\begin{proof} Consider first continuous evolution between jump instants. As long as $x(t)\in \mathring{\mathcal{S}}_1$, the state evolves within $\mathcal{S}_1$. Suppose the trajectory reaches the boundary $\partial \mathcal{S}_1$ at some time $t=\tau_k$, at which point the system enters mode $q_3$. On $\partial \mathcal{S}_1$, two cases arise. If
\[
\dot V(x)=\nabla V(x)^\top \mathcal{F}(q_3,x)\le 0
\]
then no reset is triggered and the trajectory either evolves tangentially along the boundary or re-enters the interior of $\mathcal{S}_1$. Hence, continuous evolution does not drive the state outside $\mathcal{S}_1$. However, if
\[
\dot V(x)>0
\]
then an impulsive braking input is applied at $t=\tau_k$. The post-jump state is given by \eqref{eq:braking_jump}, and by Lemma~\ref{lemma1} and Corollary~\ref{corr:1},
\[
V\bigl(x(\tau_k^+)\bigr) \le V\bigl(x(\tau_k^-)\bigr)=r_1^2
\]
which implies
\[
x(\tau_k^+)\in \mathcal{S}_1
\]

\noindent Following the brake, if $x(\tau_k^+)\in \mathcal{S}_2$, the system transitions to mode $q_1$. Otherwise, it transitions to mode $q_4$, where the system evolves under the autonomous dynamics $\dot x = F(x)$. By Assumption~\ref{ass:V_lyap}, the trajectory is driven toward the equilibrium until $\partial \mathcal{S}_2$ is encountered. Therefore, neither continuous evolution nor impulsive resets can drive the state outside $\mathcal{S}_1$. It follows that
\[
x(t)\in \mathcal{S}_1, \qquad \forall t\ge 0
\]
and hence $\mathcal{S}_1$ is positively invariant.

\end{proof}

Note that since the ADP controller $\hat{u}(x)$ is bounded \cite{vamvoudakis2010online} and $u_{\rm PE}$ is bounded by assumption, it can be shown that there exists a strictly positive lower bound on the duration between two consecutive impulsive braking inputs. Thus, Zeno phenomenon cannot occur in \eqref{eq:hybrid_automaton}.

\section{Simulation Results}

To demonstrate the effectiveness of the proposed framework, a nonlinear mechanical system of the form \eqref{eq:nonlinear_system} is considered with
\[
A = \begin{bmatrix}
0 & 1 \\
-\frac{k}{m} & -\frac{c}{m}
\end{bmatrix}, \quad
B = \begin{bmatrix}
0 \\
\frac{1}{m}
\end{bmatrix}, \quad
\phi(x) =
\begin{bmatrix}
0 \\
\frac{k_1}{m} x_1^3 + \frac{b_1}{m} x_2^3
\end{bmatrix}
\]

\noindent Here, $m$ denotes the mass, $k$ the linear stiffness, and $c$ the damping coefficient, while $k_1$ and $b_1$ represent the nonlinear stiffness and damping terms. The parameters are selected as $m=1$, $k=2$, $c=1$, $k_1=2$, and $b_1=1$. The matrix $A$ is Hurwitz. Although the nonlinear terms $\phi(x)$ are destabilizing, they are cubic and therefore negligible near the equilibrium, where the linear dynamics dominate.

\begin{figure}[b!]
\centering
\psfrag{A}[c][c][1][0]{\small $x_2$ vs $x_1$}
\includegraphics[width=0.55\linewidth]{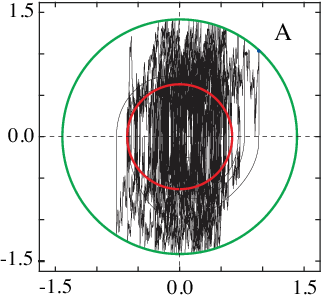}
\caption{Phase portrait illustrating impulse-supervised confined exploration. The trajectory remains confined within $\mathcal{S}_1$ through impulsive braking.}
\label{Fig3}
\end{figure}

The optimal value function in \eqref{eq:linear_weights_ideal} is approximated as
\[
\hat{J}(x) = \hat{W}^\top \sigma(x)
\]

\noindent In \eqref{eq:cost}, $Q$ is chosen as identity matrix and $R$ is selected as $10$. The optimal weights obtained from the algebraic Riccati equation are
\begin{equation}
W =
\begin{bmatrix}
1.7111 & 0.4969 & 0.7224
\end{bmatrix}^\top
\label{eq:optimal_weights_example}
\end{equation}

The initial weight estimate is chosen as
\[
\hat{W}(0)=\begin{bmatrix}0.01 & 0.01 & 0.01\end{bmatrix}^\top
\]

\noindent The parameters in \eqref{eq:outer_circle} and \eqref{eq:inner_circle} were selected as $r_1 = 2$, $\beta = 0.2$ and the initial condition was
\[
x_0 = [0.8 \;\; 1.0]^\top \in \mathring{\mathcal{S}}_1 \setminus \mathcal{S}_2
\]

\noindent Simulation results are shown in Figs.~\ref{Fig3}-\ref{Fig4}.

 \begin{figure}[t!]
 \vspace{0.1in}
 \centering
\psfrag{A}[c][c][1][0]{\small $x_1$}
\psfrag{B}[c][c][1][0]{\small $x_2$}
\psfrag{C}[c][c][1][0]{\small $\hat W$}
\psfrag{D}[c][c][1][0]{\small time (sec)}

 \includegraphics[width=0.75\linewidth]{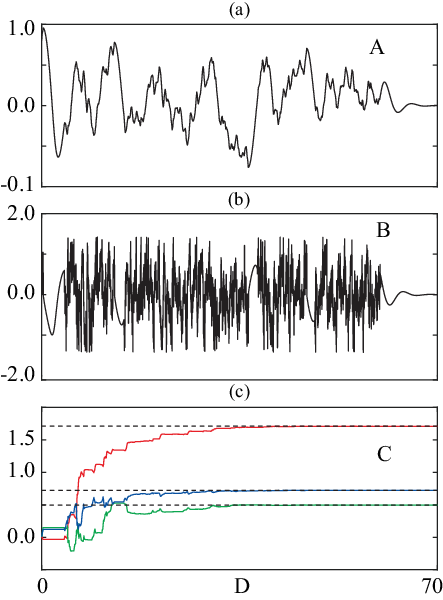}
\caption{System evolution across three subplots: (a) continuous evolution of $x_1$, (b) $x_2$ exhibiting discontinuous jumps due to impulsive effects, and (c) convergence of the critic weight parameters.}
 \label{Fig4}
 \end{figure}
  \begin{figure}[b!]
 \centering
\psfrag{A}[c][c][1][0]{\small $x_2$ vs $x_1$}

 \includegraphics[width=0.52\linewidth]{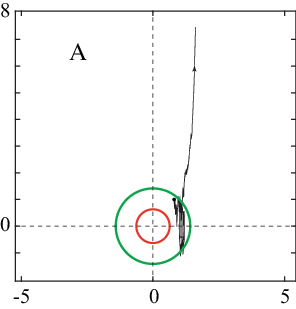}

\caption{Phase portrait of the system in \eqref{eq:linear_model} without impulse supervised ADP. The trajectories go beyond the confined exploration set and the system becomes unstable.}
 \label{Fig5}
 \end{figure}

Fig.~\ref{Fig3} shows the phase portrait where the trajectory remains confined within $\mathcal{S}_1$ (outer circle) at all times. Due to the persistently exciting input $u_{\mathrm{PE}}$, the state tends to move outward; however, whenever the trajectory reaches $\partial \mathcal{S}_1$, impulsive braking is applied, preventing escape from the confined exploration region. Following each braking event, the velocity is reset to zero, the weight update is suspended, and the trajectory evolves autonomously back toward $\mathcal{S}_2$, where learning is reactivated.

The effect of impulsive inputs is evident in Fig.~\ref{Fig4}(b), where the velocity $x_2$ exhibits instantaneous resets to zero, while the position $x_1$ remains continuous, consistent with the jump map in Section~\ref{Sec2c}. In Fig.~\ref{Fig4}(c), the weight estimates remain constant during intervals corresponding to mode $q_4$, where its adaptation is suspended. For an adaptation gain of $\eta=30$ in \eqref{eq:hybrid_weight_update}, the weights converge to
\[
\hat{W} = [\,1.7103,\; 0.4965,\; 0.7221\,]^\top
\]
around $t \approx 60$~s, which closely matches the optimal weights in \eqref{eq:optimal_weights_example}. After convergence, the excitation input $u_{\mathrm{PE}}$ in \eqref{eq:input_with_PE} is turned off, and the states converge to the equilibrium, as seen in Fig.~\ref{Fig4}(a).

Fig.~\ref{Fig5} shows the phase portrait under conventional continuous-time ADP, without the impulsive supervisory layer. In this case, the trajectory leaves $\mathcal{S}_1$ and enters a region where the nonlinear dynamics dominate, resulting in a finite escape of the system trajectory. This highlights the role of the impulsive supervisory layer in enabling local optimal control learning while preserving boundedness of the state during exploration.

\begin{remark}
In this work, ideal-impulsive braking inputs were used. In practice, impulsive inputs can be approximated using high-gain feedback control or solenoid-based braking systems - see experimental validations in \cite{kant2021stabilization, jafari2015enlarging, kant2019estimation, kant2024EHGO}.
\end{remark}

\section{Conclusion}
This paper proposes an impulse-supervised confined exploration framework for learning locally optimal control policies for a class of nonlinear systems. By integrating continuous-time ADP with an impulsive supervisory layer, the proposed approach enables the persistent excitation required for learning while confining the state evolution to a region where a local linear approximation remains valid. Impulsive braking enforces invariance of the exploration set and prevents large state deviations during learning. The resulting closed-loop system is modeled as a hybrid automaton, and invariance of the exploration set is established. Simulation results demonstrate confined exploration, bounded system behavior, and convergence to the locally optimal control policy. The present work assumes knowledge of the local linear dynamics. Future work will focus on model-free extensions, higher-dimensional systems, and experimental validation.

\bibliographystyle{IEEEtran}
\bibliography{references}
\end{document}